\documentclass[12pt,a4paper]{article}
\usepackage[utf8]{inputenc}
\usepackage[english]{babel}
\usepackage{amsmath}
\usepackage{amsfonts}
\usepackage{amssymb}
\usepackage{graphicx}
\usepackage{amsthm}
\usepackage{hyperref}

\newtheorem{lemma}{Lemma}

\newtheorem{theorem}[lemma]{Theorem}

\newtheorem{claim}[lemma]{Claim}

\author{Hans L. Bodlaender%
\thanks{Department of Information and Computing Sciences, Utrecht University, P.O.Box 80.089, 3508 TB Utrecht,
the Netherlands. 
Email: \href{mailto:H.L.Bodlaender@uu.nl}{H.L.Bodlaender@uu.nl},
and Department of Mathematics and Computer Science, Eindhoven University of Technology, Eindhoven,
the Netherlands. The research of this author was partially supported by the {\em Networks} project, supported by the Netherlands Organization for Scientific Research N.W.O.} 
\and
 Tom C. van der Zanden\thanks{Department of Information and Computing Sciences, Utrecht University, P.O.Box 80.089, 3508 TB Utrecht,
the Netherlands. 
Email: \href{mailto:T.C.vanderZanden@uu.nl}{T.C.vanderZanden@uu.nl}}
}

\begin{document}

\title{On Exploring Temporal Graphs of Small Pathwidth}

\date{}

\maketitle

\begin{abstract}
We show that the {\sc Temporal Graph Exploration Problem} is NP-complete, even when the underlying graph has pathwidth 2 and
at each time step, the current graph is connected.
\end{abstract}

\section{Introduction}
Networks can change during time: roads can be blocked or built, friendships can wither or new friendships are formed, connections in
a computer network can go down or be made available, etc. Temporal graphs can serve as a model for such changing networks. 

In this note,
we study the complexity of a problem on temporal networks: the {\sc Temporal Graph Exploration} problem. Recently, Akrida et al.~\cite{AkridaMS18} showed that this problem is NP-complete, even when the underlying graph is a star. An important special case, studied
by Erlebach et al.~\cite{ErlebachHK15}, is when at each point in time, the current graph is connected. This case is trivial when the
underlying graph is a tree; we show that it is already NP-complete when the underlying graph has pathwidth 2.

\smallskip

A {\em temporal graph} $\cal G$ is given by a
series of graphs $G_1=(V,E_1)$, $G_2 = (V,E_2)$, \ldots, $G_L=(V,E_L)$, each with the same vertex set, but the set of 
edges can be different at different time steps. At time step $i$, only the edges in $E_i$ exist and can be used.
Each $i$, $1\leq i\leq L$ is called a {\em time step}, $G_i$ is the {\em current graph} at time $i$. 
The {\em underlying} graph is formed by taking the union of the graphs at the different time steps.
I.e., if we have $L$ time steps, and graphs $G_1=(V,E_1)$, $G_2 = (V,E_2)$, \ldots, $G_L = (V,E_L)$, the underlying
graph is $(V, E_1 \cup E_2 \cup \cdots \cup E_L)$, so an edge exists in the underlying graph if it exists in at least one time step.
Many graph properties
can be studied in the setting of temporal graphs; this note looks at the problem of {\em exploring} the graph.

In temporal graphs, we can define a {\em temporal walk}:
we have an explorer who at time step
$1$ is at a specified vertex $s$; at each time step $i$ she can move over an edge in $G_i$ or remain at her current location.
In the {\sc Temporal Graph Exploration} problem, we are given a temporal graph and a starting vertex $s$, and are asked
if there exists a temporal walk starting at $s$ that visits all vertices within a given time $L$. 
A variant is when we require that the walk ends at the starting vertex $s$; we denote this by {\sc RTB Temporal Graph Exploration}, with
RTB the acronym of {\em return to base}. (See \cite{AkridaMS18}.)

Michail and Spirakis \cite{MichailS16} introduced the {\sc Temporal Graph Exploration} problem. 
It is easy to see that
even if the graphs do not change over time, the exploration problem is NP-complete, as it contains {\sc Hamiltonian Path} as a
special case (set $L=n-1$.) Michail and Spirakis \cite{MichailS16} showed that the problem does not have a $c$-approximation, unless $P=NP$,
and obtained approximation algorithms for several special cases.

An important special case is when we require that at each time step, the current graph $G_i$ is connected. Now, if the time $L$
is sufficiently large compared to the number of vertices $n$, it is always possible to explore the graph. Specifically,
Erlebach et al.~\cite{ErlebachHK15} showed that in this case, the graph can be explored in $O(t^2 n \sqrt{n} \log(n))$ time steps,
where $t$ is the treewidth of the underlying graph. Similarly, if the underlying graph is 
a $2$ by $n$ grid then $O(n \log^3 n)$ time steps always suffice.

Recently, Akrida et al.~\cite{AkridaMS18} studied the {\sc Temporal Graph Exploration} problem when the underlying graph is
a star $K_{1,r}$. Even when each edge exists in at most six time steps, the problem is NP-complete. We use the following
of their results as starting point.

\begin{theorem}[Akrida et al.~\cite{AkridaMS18}]
{\sc RTB Temporal Graph Exploration} is NP-complete, when the underlying graph is a star, and each edge exists in at most six graphs $G_i$,
and the start and end vertex is the center of the star.
\end{theorem}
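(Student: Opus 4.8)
The plan is to prove both membership in NP and NP-hardness. Membership is routine: a temporal walk using at most $L$ steps is a certificate of polynomial size, and one checks in polynomial time that every move either stays put or traverses an edge present in the current graph, that the walk starts and ends at the center $c$, and that every leaf is visited. The substance is the hardness reduction, and I would reduce from an interval-selection (single-machine scheduling) problem in which each job is given a bounded number of candidate time intervals and we must choose exactly one interval per job so that the chosen intervals are pairwise disjoint. With three candidate intervals per job this selection problem is NP-complete, since it encodes a ``choose one of three'' constraint satisfaction problem; and three candidates per job is precisely what yields the bound of six appearances per star edge.

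Concretely, given $n$ jobs, each with three candidate intervals, I would first normalise all intervals to a common length $\lambda$ and embed them on a timeline with large gaps, arranging that the distance between any two consecutive appearance times that do not bound the same candidate interval is strictly larger than $\lambda$. For each job $j$ I create one leaf $v_j$, and I let the edge $\{c,v_j\}$ be present at exactly the six endpoints of the three candidate intervals of $j$. To visit and leave $v_j$ the explorer must depart from $c$ at one appearance time of $\{c,v_j\}$ and return at a later one, during which interval she is ``away'' from the center and hence unable to service any other leaf. I then set the horizon $L$ to be just past the last appearance, so tightly that the total amount of time the explorer may spend away from $c$ equals exactly $n\lambda$.

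The two directions would read as follows. If the jobs admit a conflict-free selection, the explorer visits the leaves in the order of their chosen intervals, departing and returning at the endpoints of the selected interval of each job and idling at $c$ in between; disjointness of the chosen intervals guarantees she is never required to be in two places at once, and she trivially returns to base. Conversely, from a feasible return-to-base exploration I would read off, for each leaf, the away-interval the explorer used, and these away-intervals are pairwise disjoint because the explorer is a single agent. The crux, and the step I expect to be the main obstacle, is a ``no-cheating'' lemma: every away-interval has length at least $\lambda$, with equality only when its two endpoints bound a genuine candidate interval (this is exactly where the equal lengths and enlarged gaps are used), so the tight budget $n\lambda$ forces each away-interval to coincide with one of that job's candidate intervals. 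Establishing this requires a careful case analysis over which pairs of appearance times the explorer might exploit, ruling out mismatched endpoints and gap-spanning trips; once it is in place the feasible explorations correspond exactly to conflict-free selections and the reduction is complete.
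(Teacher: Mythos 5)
First, a caveat on the comparison itself: the paper does not prove this statement --- it is Theorem~1, imported verbatim from Akrida et al.~\cite{AkridaMS18} and used as the source problem for the reduction in Theorem~\ref{theorem:main} --- so there is no in-paper proof to measure you against, and your argument has to stand entirely on its own. Your overall architecture is sensible and is in the right spirit for star exploration: NP-membership is indeed routine, three candidate intervals per job naturally yield the six edge appearances, and the central observation that a single explorer's away-intervals must be pairwise disjoint is exactly the right invariant. However, there are two genuine gaps. The first is in the source problem. ``Normalising all intervals to a common length $\lambda$'' while preserving the conflict structure is not possible in general: unit interval graphs are a proper subclass of interval graphs, so an instance in which one long interval meets three pairwise disjoint short ones cannot be re-embedded with equal lengths. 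You therefore need NP-completeness of the selection problem already in the restricted form your gadget requires --- equal lengths, three candidates per job, and each job's own candidates pairwise separated by more than $\lambda$ --- and the one-line remark that it ``encodes a choose-one-of-three CSP'' neither proves nor cites this. A result such as Keil's on scheduling with discrete starting times is the right kind of citation, but even it does not automatically come with the separation guarantees your no-cheating argument leans on.

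The second and more serious gap is the no-cheating lemma itself. Your proposed mechanism is a global budget: ``$L$ is set so tightly that the total time the explorer may spend away from $c$ equals exactly $n\lambda$.'' But idling at the center is free, so capping the horizon just past the last appearance does not bound the total away time by $n\lambda$ --- the away time is simply the sum of the lengths of whichever endpoint pairs the explorer chooses, and nothing charges her for choosing long ones. Worse, the genuinely dangerous cheat is not a long trip but a \emph{gap} trip: departing at the \emph{right} endpoint of one candidate interval of job $j$ and returning at the \emph{left} endpoint of the next. Such an away-interval is a legal way to visit $v_j$, may well be disjoint from every other job's appearances, and contains no candidate interval of job $j$, so no conflict-free selection can be read off from it; if the two candidates overlap, the corresponding pair even has length strictly less than $\lambda$, contradicting your claimed lower bound. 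Ruling this out requires either additional structure in the source instance (e.g., blocking jobs living inside every such gap) or a different accounting argument; as written, the backward direction of the reduction does not go through. This is precisely the step you flag as ``the main obstacle,'' but the budget mechanism you propose to overcome it is not actually enforced by the construction.
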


For more results, including special cases, approximation algorithms and inapproximability results, see \cite{AkridaMS18,ErlebachHK15,MichailS16},
and see \cite{Michail16} for a survey.

\smallskip

It is well known that problems that are intractable (e.g., NP-hard) on general graphs become easier (e.g., linear time solvable)
when restricted to graphs of bounded treewidth
(see e.g., \cite[Chapter 7]{Cyganbook}.) An example is {\sc Hamiltonian Path}, which can be solved in $O(2^{O(t)} n)$ time on
graphs of treewidth $t$ \cite{BodlaenderCKN,CyganKN18}.
Unfortunately, these positive results appear not to carry over to temporal graphs: we show that the {\sc Temporal Graph Exploration} problem is NP-hard,
even when the underlying graph has pathwidth 2 (and thus also treewidth 2), and at each point in time, the graph is a tree, and thus
connected.

Interestingly, there are other problems on temporal graphs that do become tractable when the treewidth is bounded. Specifically,
Fluschnik et al.~\cite{FluschnikMNZ18} showed that finding a small {\em temporal separator} becomes tractable when the underlying
graph has bounded treewidth; the problem is NP-hard in general \cite{KempeKK02}.

The pathwidth of graphs was defined by Robertson and Seymour~\cite{RobertsonS1}.  A {\em path decomposition} of a graph $G=(V,E)$ is a sequence of subsets (called {\em bags}) of $V$ $(X_1, \ldots, X_r)$,
such that $\bigcup_{1\leq i\leq r} X_r=V$, for all $\{v,w\}\in E$, there is an $i$ with $v,w\in X_i$, and if $1\leq i_1<i_2<i_3\leq r$,
then $X_{i_2} \subseteq X_{i_1}\cap X_{i_3}$. The {\em width} of a path decomposition $(X_1, \ldots, X_r)$ equals $\max_{1\leq i\leq r} |X_i|-1$;
the {\em pathwidth}  of a graph $G$ is the minimum width of a path decomposition of $G$.
The pathwidth of a graph is an upper bound for its treewidth. (See e.g.~\cite[Chapter 7]{Cyganbook}.)

$K_{1,r}$ is a star graph with $r+1$ vertices, i.e., we have one vertex of degree $r$ which is adjacent to the remaining $r$ vertices, which have degree 1.

\section{Hardness result}
We now give our main result. 

\begin{theorem}
The {\sc Temporal Graph Exploration Problem} is NP-complete, even if each graph $G_i =(V,E_i)$ is a tree, and
the underlying graph has pathwidth 2.
\label{theorem:main}
\end{theorem}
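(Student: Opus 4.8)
The plan is to establish membership in NP and then prove NP-hardness by a reduction from \textsc{RTB Temporal Graph Exploration} on stars, which is NP-complete by the result of Akrida et al.~\cite{AkridaMS18} quoted above. Membership is immediate: a temporal walk performs at most one move per time step, so a solution is a list of at most $L$ moves; since the instance already spells out all $L$ snapshots, this certificate has polynomial size and is checked in polynomial time by verifying that each move uses an edge of the corresponding $E_i$ and that every vertex is eventually visited.

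For the reduction, fix a star instance with center $c$, leaves $\ell_1,\dots,\ell_r$, horizon $L$, and edge $\{c,\ell_j\}$ present exactly at the time steps of a set $T_j$ with $|T_j|\le 6$, where the explorer must start and finish at $c$. I would first record a normal form: because the only edges incident to $c$ lead to leaves, every star walk decomposes into \emph{excursions}, each leaving $c$ for some $\ell_j$ at a time in $T_j$, possibly idling at the leaf, and returning to $c$ at a later time in $T_j$; packing these excursions into the windows $T_j$ with a single explorer and finishing at $c$ by time $L$ is exactly what makes the star instance hard.

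The target instance should have a pathwidth-$2$ \emph{underlying} graph that, crucially, is not itself a tree (otherwise the instance would fall into the trivial tree case), while every snapshot is a spanning tree of it. I would therefore build a ladder-like backbone (a subgraph of a $2\times N$ grid, which has pathwidth $2$ and contains cycles), force a monotone ``sweep'' of the explorer along it, and attach the leaves as the vertices she must dip to. Two devices make the simulation faithful. Time-gating: backbone edges are present only at prescribed steps, so the explorer advances in lock-step with a global clock and cannot back up, and the edge giving access to $\ell_j$ is offered precisely at the steps encoding $T_j$, so that reaching $\ell_j$ costs the same two moves as a star excursion and is possible only inside its window. Parking: to keep each snapshot a spanning tree, an inactive leaf must stay attached, so I would hang it on a part of the ladder lying behind the explorer's current position, where the gating forbids her to return; the mandatory connectivity edge is then present but useless, so the only way to visit $\ell_j$ remains a within-window dip, mirroring the star.

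With the gadget in hand I would set the new horizon $L'$ as a function of $L$ and $r$ so that the explorer's budget matches the star's, prove both directions of the equivalence (a legal star schedule yields a sweep-with-dips, and, reading the backbone coordinate as a clock, any successful exploration yields a legal packing of excursions), and exhibit a width-$2$ path decomposition of the underlying ladder. The non-RTB target absorbs the return-to-base condition because the forced sweep already terminates at a designated vertex. The step I expect to be the main obstacle is making the spanning-tree constraint coexist with tight control of the explorer: every edge inserted for connectivity is an edge she may also traverse, so the parking mechanism must be engineered so that no inactive-leaf attachment ever opens an out-of-window shortcut, and this has to hold at the same time as the one-move-per-step pace, the detour bookkeeping that keeps the clock synchronized when a dip is taken, and the requirement that the union of all snapshots be a pathwidth-$2$ graph with cycles rather than a tree. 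Reconciling \emph{tree per step}, \emph{faithful windows}, and \emph{width two} simultaneously is the technical heart; the rest is routine verification.
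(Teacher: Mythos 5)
Your high-level strategy (reduce from \textsc{RTB Temporal Graph Exploration} on stars, keep every snapshot a tree, get pathwidth $2$ in the union) matches the paper, and your observations about NP membership and about why the underlying graph cannot itself be a tree are correct. But the proposal has a genuine gap: the gadget is never constructed, and the mechanism you lean on to make it work --- ``time-gating'' the backbone so that the explorer ``cannot back up'' and so that ``the gating forbids her to return'' to parked leaves --- is not available in this model. Since every snapshot must be a spanning tree, every vertex already visited stays connected to the explorer's current position at every time step, and any edge that is present may be traversed in either direction; you cannot make an edge one-way or switch off the route backward. The only way to exclude backtracking and out-of-window shortcuts is a quantitative argument showing they cost too much time, and that argument is exactly the part you defer as ``the technical heart.'' Your plan of spatially encoding the windows $T_j$ along a swept backbone also creates a synchronization problem you acknowledge but do not resolve: a dip to a leaf, or idling at a leaf across several steps of its window (which legal star schedules may require), desynchronizes the explorer from the backbone clock.

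The paper's construction shows this machinery is unnecessary. It leaves the star instance completely untouched for the first $L$ time steps and merely attaches a path $p_0,\dots,p_Q$ with $Q=L(n+3)$, always fully present, hooking each connected component of $G_i$ to the path at position $p_{L(j+2)}$ (where $v_j$ is the component's lowest-numbered vertex); after time $L$ the graph becomes the full star plus the edge $\{s,p_0\}$. Because consecutive attachment points are at least $L$ apart, a simple counting argument (Claims 1--3 in the paper) shows that during the first $L$ steps any excursion onto the path is a dead end that must return to the same star vertex, that the explorer must be at $s$ at time $L$, and that the remaining $Q+1$ steps suffice only for a single monotone traversal of the path --- which forces the first $L$ steps to simulate a valid RTB star walk. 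The timing of the star edges is never re-encoded, so all the difficulties your plan runs into (faithful windows, clock bookkeeping, parking without shortcuts) simply do not arise; and the union is a caterpillar plus the vertex $s$, giving pathwidth $2$ by adding $s$ to every bag of a width-$1$ decomposition. To complete your proof you would either have to supply the missing distance/timing analysis for your ladder gadget, or, more economically, replace the gadget by an ``unexplorable during phase one, exactly explorable in phase two'' appendage of this kind.
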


\begin{proof}
We use a reduction from {\sc RTB Temporal Graph Exploration} for star graphs. Suppose we have a temporal star ${\cal K}_{1,n-1}$, given by
a series of subgraphs of $K_{1,n-1}$, $G_1=(V,E_1)$, $\ldots$, $G_L=(V,E_L)$, and a start vertex $s$, which is the center of the star. We denote the vertices of $K_{1,n}$ by $v_0, \ldots, v_{n-1}$, with $s=v_0$.

We now build a new temporal graph, as follows.
Set $Q = L \cdot (n+3)$.

The vertex set of the new graph consists of $V$ and $Q+1$ new vertices. These will form a path. The new
vertices are denoted $p_0, \ldots, p_{Q}$ and called {\em path vertices}; the vertices in $V$ are called {\em star vertices}.

We now define a temporal graph ${\cal G}'$, given by a series of graphs $G'_i$, $1\leq i \leq L'$. $G'_i$ has the following edges:
\begin{itemize}
 \item For each $i$, the vertices $p_0, \ldots, p_{Q}$ form a path: we have edges $\{p_j,p_{j+1}\}$ for $1\leq j< Q$.
 \item If $i\leq L$, all edges in $G_i$ are also edges in $G'_i$. 
 \item If $i\leq L$, for each star vertex $v_j\in V$: if $v_j$ is the lowest numbered vertex in a connected component of
 $G_i$, we have an edge $\{v_j, p_{L \cdot (j+2)}\}$.
 \item If $i>L$, we have an edge from each star vertex $v_i\neq s$ to $s$, and an edge from $s$ to $p_0$.
\end{itemize}

It is not hard to see that each $G'_i$ is a tree. If $i\leq L$, then $G'_i$ is obtained by adding the path to $G_i$ and one edge from the path to each
connected component of $G_i$. If $i\geq L$, then $G'_i$ is obtained taking a path and $K_{1,n}$ and adding an edge between a path and star vertex.

\begin{figure}[htb]
\begin{center}
\includegraphics[scale=1]{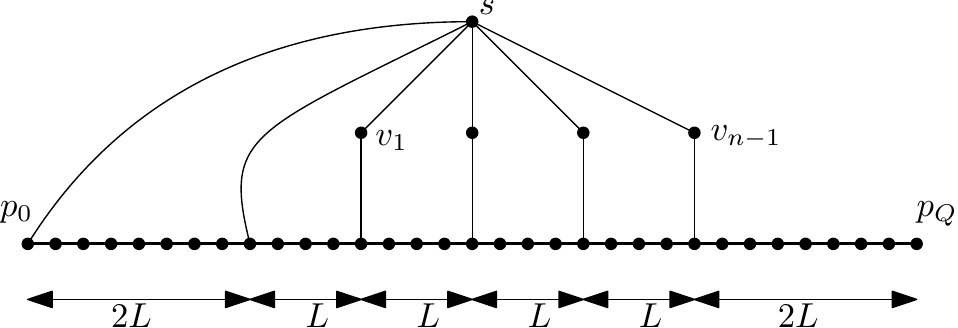}
\end{center}
\label{figure:example}
\caption{Illustration to the proof of Theorem~\ref{theorem:main}. Note that edges between path vertices are present at each time step; other edges are present in a subset of the time steps.}
\end{figure}

\smallskip

The idea behind the proof is that during the first $L$ time steps, we explore the star vertices as normal, while the path serves to keep the graph connected but can not be explored. To explore the path vertices, we must make one single pass from $p_0$ to $p_Q$, as we do not have sufficient time to traverse either the section from $p_0$ to $p_{2L-1}$ or that from $p_{Q-2L+1}$ twice: traversing the edges between star vertices and path vertices (other than edge $\{s,p_0\}$) cannot contribute to a solution.

\begin{lemma}
There is a temporal walk in ${\cal G}'$ that starts at $s$ and visits all vertices in $G'$ 
in at most $L+Q+1$ time steps, if and only if there is
a temporal walk in the temporal star ${\cal K}_{1,n-1}$ that starts at $s$, ends in $s$ and visits all vertices in ${\cal K}_{1,{n-1}}$ in at most $L$ time steps.
\label{lemmaiff}
\end{lemma}

\begin{proof}
First, suppose that there is
a temporal walk in  ${\cal K}_{1,n-1}$ that starts at $s$, ends at $s$ and visits all vertices in at most $L$ time steps. Then, we visit all vertices in ${\cal G}'$, by
first making the temporal walk in the star, if necessary wait in $s$ until the end of time step $L$ and at time $L+1$ move 
from $s$ to
$p_1$, and then visit all path vertices by traversing the path in the remaining $Q$ time steps.

\smallskip

Suppose we have a temporal walk that starts at $s$ and visits all vertices in ${\cal G}'$ in at most $L+Q+1$ time steps.

\begin{claim}
If we are at a path vertex $p_i$ at the end of time step $\alpha\leq L$, then $L < i < Q-L$.
\label{claim1}
\end{claim}

\begin{proof}
If we are at a path vertex $p_i$ at the end of time step $\alpha\leq L$, then we moved one or more times from a star vertex to a path vertex
during the first $\alpha$ time steps. Consider the last of these moves, say that we moved at time step $\beta \leq \alpha$ from a star vertex
$v_j$ to a path vertex $p_{j'}$; between time step $\beta+1$ and $\alpha$ we stay at path vertices. We have that $j'= L \cdot (j+2)$, by construction of the temporal graph. We can make less than $L$ steps after reaching $p_{j'}$ until time step $\alpha\leq L$, hence
$j'-L < i < j'+L$. Now, $L = L \cdot (0+2) -L \leq L \cdot (j+2) -L = j'-L < i < j'+L = L \cdot (j+2) +L \leq L \cdot (n-1+2) +L = (n+2)\cdot L = Q-L$.
\end{proof}

\begin{claim}
At the end of time step $L$, we must be in vertex $s$.
\label{claim2}
\end{claim}

\begin{proof}
Suppose not. Note that both $p_0$ and $p_Q$ are not yet visited, by claim~\ref{claim1}. If we are at a star vertex $v_i\neq s$ at the end of time step $L$, then
we can only visit the path vertices by first moving to $s$, then to $p_0$, and then visiting the path vertices in order; this costs one time step too many. Suppose we are at a path vertex $p_i$ at the end of time step $L$. Suppose we visit $p_0$ before $p_Q$. Then, 
 we must make
at least $i$ steps from $p_i$ to $p_0$, and then $Q$ steps from $p_0$ to $p_Q$: in total $i+Q > L+Q$ steps; contradiction. Suppose
we visit $p_Q$ before $p_0$. Then we must make at least $Q-i$ steps from $p_i$ to $p_Q$, and then $Q$ steps from $p_Q$ to $p_0$: in total
$2Q-i > L+Q$ steps; this is again a contradiction.
\end{proof}

\begin{claim}
If we move at time step $i \leq L$ from a star vertex $v_i$ to a path vertex $p_j$, then the first star vertex visited after time step $i$ is again $v_i$, and this move to $v_i$ will be made before the end of time step $L$.
\label{claim3}
\end{claim}

\begin{proof}
By Claim~\ref{claim2}, we must move to a star vertex before the end of time step $L$. If $p_{j'}$ is a neighbor of a star vertex and $j\neq j'$,
then $p_{j'}$ is at least $L$ steps on the path away from $p_j$, so we cannot reach $p_{j'}$ before time $L$, hence we must move back to the star from $p_j$, and thus move to $v_i$.
\end{proof}

Now, we can finish the proof of Lemma~\ref{lemmaiff}. Take from the walk in ${\cal G}'$ the first $L$ time steps. Change this by replacing each move
to a path vertex by a step where the explorer does not move. I.e., when the walk in ${\cal G}'$ moves from star vertex $v_i$ to a path vertex,
then we stay in $v_i$ until the time step where the walk in $G'$ moves back to the star --- by Claim~\ref{claim3}, this is a move to $v_i$.
In this way, we obtain a walk in ${\cal K}_{1,n-1}$ that visits all vertices in $L$ time steps.

\end{proof}

It remains to show that the underlying graph has pathwidth $2$. 
If we remove $s$ from the underlying graph, then we obtain a caterpillar: a graph that can obtained by taking a path, and adding vertices
of degree one, adjacent to a path vertex. These have pathwidth 1 \cite{ProskurowskiT99}; now add $s$ to all bags and we obtain a path decomposition of the underlying graph of ${\cal G}'$ of width 2.
\end{proof}

A minor variation of the proof gives also the following result.

\begin{theorem}
The {\sc RTB Temporal Graph Exploration Problem} is NP-complete, even if each graph $G_i =(V,E_i)$ is a tree, and
the underlying graph has pathwidth 2.
\label{theorem:main2}
\end{theorem}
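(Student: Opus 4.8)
The plan is to reuse the reduction behind Theorem~\ref{theorem:main} almost verbatim: again reduce from {\sc RTB Temporal Graph Exploration} on star graphs, build the same temporal graph ${\cal G}'$ with its path $p_0,\dots,p_Q$, $Q=L\cdot(n+3)$, and exploit the fact that the path can only be traversed in a single monotone pass. The only new ingredient is a cheap way to get the explorer back to $s$ after it reaches $p_Q$: walking the path back would cost another $Q$ steps, which the budget forbids, so instead I would add, at one single final time step $L'=L+Q+2$, the edge $\{s,p_Q\}$ and raise the time horizon to $L'$.

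The point that needs care is keeping every $G'_i$ a tree. The underlying graph already contains $\{s,p_0\}$, so placing $\{s,p_0\}$ and $\{s,p_Q\}$ in the same graph would close the path and $s$ into a cycle. I would therefore, at the final time step $i=L'$, include $\{s,p_Q\}$ \emph{in place of} $\{s,p_0\}$ while keeping the path edges and the star-to-$s$ edges; that graph is a path and a star joined by the single edge $\{s,p_Q\}$, hence still a tree. Both $\{s,p_0\}$ and $\{s,p_Q\}$ are incident to $s$, so deleting $s$ from the underlying graph leaves exactly the caterpillar from the proof of Theorem~\ref{theorem:main}; adding $s$ to every bag again yields pathwidth $2$.

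For correctness I would restate Lemma~\ref{lemmaiff} with the bound $L'=L+Q+2$ and require the walk in ${\cal G}'$ to return to $s$. In the forward direction, an RTB star walk of length $\le L$ is run first (so the explorer sits at $s$ at the end of time $L$), then $s\to p_0$ at time $L+1$, the path is traversed in the next $Q$ steps, and finally $\{s,p_Q\}$ is used at time $L'$ to return to $s$. In the reverse direction I would reprove the three claims: Claim~\ref{claim1} is unchanged, as it only concerns times $\le L$, so $p_0$ and $p_Q$ are still unvisited at the end of time $L$. For the analogue of Claim~\ref{claim2}, only $Q+2$ steps remain after time $L$, and covering the whole path from an interior vertex $p_i$ (with $L<i<Q-L$) and then leaving it for $s$ costs more than $Q+2$ (ending at $p_0$ costs $2Q-i+1$, ending at $p_Q$ costs $i+Q+1$, both $>Q+2$), while starting from a star vertex $v\ne s$ costs $1+1+Q+1=Q+3>Q+2$; hence the explorer is exactly at $s$ at the end of time $L$, with just enough budget left for one monotone pass plus the return. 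Claim~\ref{claim3} carries over verbatim, and the extraction argument now yields a star walk that both starts and ends at $s$, i.e.\ a valid RTB star walk.

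I expect the only genuine obstacle to be this tree/cycle bookkeeping at the return step, together with fixing the additive constant so that the reverse direction still forces the explorer to $s$ at time $L$ with no slack. Membership in NP is immediate: $L'$ is polynomial in the input, so a temporal walk is a polynomial-size certificate whose validity, coverage, and return to $s$ are all checkable in polynomial time.
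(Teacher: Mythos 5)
Your proof is correct and takes essentially the same route as the paper: append one extra time step whose new edge $\{s,p_{Q}\}$ lets the explorer return to base after its single monotone pass along the path, and recheck that the budget $L+Q+2$ leaves no slack. You are in fact slightly more careful than the paper's one-line modification (whose final graph consists of a single edge and hence is not literally a tree on the whole vertex set), since you retain the path and star edges and swap $\{s,p_{0}\}$ for $\{s,p_{Q}\}$ so that every $G'_i$ remains a tree.
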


\begin{proof}
Modify the proof of Theorem~\ref{theorem:main} as follows: add one time step; the current graph in the last time step has one edge,
from $p_{Q}$ to $s$.
\end{proof}

\section{Conclusions}
In this note, we showed that the {\sc Temporal Graph Exploration Problem} is NP-complete, even when we require that at each time step, the
graph is connected, or more specifically a tree, and the underlying graph (i.e., the graph where an edge exists whenever it exists for at least one time step) has pathwidth 2, and hence treewidth 2.
This contrasts many other results for graphs of bounded treewidth, including a polynomial time algorithm for finding small
temporal separators for graphs of small treewidth \cite{FluschnikMNZ18}. 

If we require that the graph is connected at each time step, the case that the treewidth is 1 becomes trivial (as this deletes all temporal
effects). Interesting open cases are when the underlying graph is outerplanar, or an almost tree, i.e, 
can be obtained by adding one edge to a tree.


\end{document}